\newtheorem{theorem}{Theorem}[section]
\newtheorem{proposition}[theorem]{Proposition}
\theoremstyle{remark}
\newtheorem*{remark}{{\bf Remark}}
\newcommand{\CC}{\mathbb{C}}
\newcommand{\dd}{{\rm d}}
\begin{document}

\baselineskip=16pt

\begin{titlepage}
\title{\bf Vortices as degenerate metrics}
\vskip -70pt
\vspace{3cm}

\author{{J. M. Baptista}}  

\date{April 2013}

\maketitle

\thispagestyle{empty}
\vspace{2cm}
\vskip 20pt
{\centerline{{\large \bf{Abstract}}}}
\noindent
We note that the Bogomolny equation for abelian vortices is precisely the condition for invariance of the Hermitian-Einstein equation under a degenerate conformal transformation. This leads to a natural interpretation of vortices as degenerate hermitian metrics that satisfy a certain curvature equation. Using this viewpoint, we rephrase standard results about vortices and make new observations. We note the existence of a conceptually simple, non-linear rule for superposing vortex solutions, and we describe the natural behaviour of the $L^2$-metric on the moduli space upon restriction to a class of submanifolds.

\let\thefootnote\relax\footnote{
\noindent
{\small {\sl \bf  Keywords:} Vortex equation; degenerate Hermitian-Einstein metric; vortex superposition;}
}

\end{titlepage}

\section{Introduction}

The abelian Higgs model at critical coupling is a classical field theory defined on a hermitian line bundle $L \rightarrow M$ over a Riemannian manifold. The variables are a $U(1)$-gauge field $A$ and a complex scalar field $\phi$, or, geometrically, a hermitian connection and a smooth section of the bundle. The theory is defined by the static energy functional 
\begin{equation}\label{energy}
E(A, \phi) \ = \ \int_{(M,\,  g)}\,  \frac{1}{2\,e^2}\, \vert F_A \vert^2 \ + \ \vert \dd_A \phi \vert^2  \ + \ \frac{e^2}{2} \, \big( \vert \phi \vert^2 - \tau \big)^2 \ ,
\end{equation}
where $F_A$ is the curvature form, $\dd_A \phi$ is a covariant derivative, and we have explicitly inserted positive constants $e^2$ and $\tau$.  When the background $(M, g)$ is a complex K\"ahler manifold, this functional has the notable property of being minimized precisely by the field configurations $(A, \phi)$ that satisfy a set of first-order PDE's, called the abelian vortex equations \cite{Brad}. Denoting by $\omega$ the K\"ahler form on $(M, g)$, these equations read
\begin{align}
&\bar{\partial}_A \phi \ = \ 0 \    \label{holomorphy_eq1} \\
&i\, \Lambda_\omega F_A \: +\: \, e^2 \, \big(\vert\phi\vert^2 \: - \: \tau \big) \ = \ 0 \  \label{vortex_eq1} \\
& F_A^{0,2} \ = 0 \ , \label{integrability_condition}
\end{align}
where $\Lambda_\omega F_A$ stands for the contraction of the curvature with the K\"ahler form. The equations make sense on any complex Hermitian manifold. Analytically, it is well known that solving the abelian vortex equations is equivalent to solving the Kazdan-Warner equation \cite{Brad}. The latter is a classical, second order PDE, that originally appeared in the problem of finding Riemannian metrics with prescribed scalar curvature \cite{KW}. So in some sense abelian vortices should be related to Riemannian metrics. Our aim here is to formulate this relation in a natural way. 

Roughly speaking, our basic observation is that the vortex equations are precisely the condition for invariance of the form $(i \Lambda_\omega F_\omega + e^2 \tau) \, \omega $ under the degenerate conformal transformation $\omega \rightarrow \tau^{-1} \vert \phi \vert^2 \, \omega$. Here $F_\omega$ denotes the Riemannian curvature form of the K\"ahler metric on $M$, although this same observation can be extended to conformal transformations of hermitian metrics on arbitrary complex vector bundles over $M$ (cf. Theorem \ref{vdm}). This observation has several interesting consequences. The first is that when $M$ is a Riemann surface, the most important case in the physics literature \cite{Manton-Sutcliffe}, studying abelian vortices is essentially equivalent to studying degenerate metrics on $M$ that satisfy a natural curvature equation. In Section 2 we explore this fact and rephrase many of the standard results about vortices on surfaces in terms of degenerate metrics. Arguably, features like the minimal volume bound and the special role of hyperbolic surfaces emerge more naturally from this viewpoint. However, the main new advantage of the curvature equation is that it exhibits two obvious symmetries that are less manifest in the traditional vortex equations. These symmetries imply the existence of a conceptually simple, non-linear rule for superposing vortex solutions, a fact that is physically and mathematically appealing, and that seems not to have been remarked before. As a first application, this superposition rule and the interpretation of vortices as degenerate metrics, taken together, suggest that a system of $d$ vortices living on a surface with one vortex constrained to a fixed position, should, somehow, be  similar to a system of $d-1$ free vortices living on that same surface equipped with an appropriately deformed background metric. Topologically, both systems have the same moduli space. In Section 2 we define what ``appropriately deformed" means and show that, with respect  to the natural $L^2$-metrics, the two moduli spaces are isometric.

In Section 3 we extend the discussion to vortices living on higher-dimensional K\"ahler or Hermitian manifolds. We describe the  relation between the abelian vortex equations and the Hermitian-Einstein equation on complex vector bundles, and we examine how the energy functional transforms under vortex superposition. In a final, brief paragraph, we note that a slightly modified version of the abelian vortex equations, a version equivalent to the perturbed Seiberg-Witten equations in the case of a K\"ahler surface, is in fact just the condition for constant scalar curvature of the deformed metric $ \tau^{-1} \vert \phi \vert^2 \, \omega$. Although we work on compact manifolds, a large part of the discussion is local, so follows through to the non-compact case.

\section{Vortices on Riemann surfaces}

\subsubsection*{The curvature equation}

Let $M$ be a compact Riemann surface equipped with a K\"ahler form $\omega$. Let $L \rightarrow M$ be a hermitian line bundle over that surface and suppose that we are given a connection $A$ and a non-trivial section $\phi$ of $L$ such that $\bar{\partial}_A \phi =  0 \, $.
This condition means that $\phi$ is a holomorphic section with respect to the holomorphic structure on $L$ induced by the connection $A$. In particular, $\phi$ vanishes at isolated points $q_j \in M$ with a well defined, positive multiplicity $n_j \in \mathbb{Z}$. So a solution of \eqref{holomorphy_eq1} determines an effective divisor $D = \sum n_j \, q_j$ on the surface $M$.  Consider now the degenerate K\"ahler form on the surface defined by 
\begin{equation} \label{conf_transformation1}
\omega' \ :=  \ \frac{1}{\tau} \, \vert \phi \vert^2 \ \omega \ .
\end{equation}
This is a smooth 2-form on $M$ that vanishes around each point $q_j$ as $|z|^{2 n_j}$, where $z$ is a complex coordinate on $M$ centered at $q_j$. We will say that a K\"ahler metric with this property is degenerate along the divisor $D$. Since on a surface equation \eqref{integrability_condition} is always satisfied, our basic question is the following: suppose that the pair $(A, \phi)$ satisfies also the vortex equation \eqref{vortex_eq1}. What does this imply for the degenerate form $\omega'$?

\begin{proposition}
The pair $(A, \phi)$ satisfies the vortex equation \eqref{vortex_eq1}
on $M$ if and only if the metric $\omega'$ is degenerate along $D$ and satisfies
\begin{equation} \label{curvature_equation}
i F_{\omega'}\: +\:  e^2 \tau \, \omega' \ = \ i F_{\omega} \: + \:  e^2 \tau \, \omega \ 
\end{equation}
over $M \setminus \{ q_j \}$, where $F_{\omega}$ denotes the curvature form of the K\"ahler metric $\omega$.
\end{proposition}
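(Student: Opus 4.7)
The plan is to convert both sides of \eqref{curvature_equation} into explicit expressions in a conveniently chosen local holomorphic frame, and then match the resulting $(1,1)$-form identity against $\omega$ to recover the scalar equation \eqref{vortex_eq1}. The essential input is the standard formula for the Chern curvature of a hermitian metric in a holomorphic frame: if $h$ is the metric expressed in such a frame, then the curvature of the Chern connection is $-\partial\bar\partial \log h$.

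I would proceed as follows. On $M \setminus \{q_j\}$, equation \eqref{holomorphy_eq1} together with the definition of the divisor $D$ says that $\phi$ is a non-vanishing holomorphic section, so it provides a local holomorphic frame for $L$. In this frame the hermitian metric of $L$ is the positive function $|\phi|^2$, hence
$$F_A \ = \ -\partial\bar\partial \log |\phi|^2.$$
Similarly, writing $\omega = \tfrac{i}{2}\, g\, \dd z \wedge \dd \bar z$ in a local holomorphic coordinate and computing the Chern curvatures of the tangent bundle for the metrics $\omega$ and $\omega' = \tau^{-1}|\phi|^2\, \omega$ gives
$$F_{\omega'} - F_{\omega} \ = \ -\partial\bar\partial\, \log\bigl(|\phi|^2/\tau\bigr) \ = \ -\partial\bar\partial \log |\phi|^2 \ = \ F_A$$
on $M \setminus \{q_j\}$. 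Substituting $\omega' = \tau^{-1}|\phi|^2\,\omega$ into \eqref{curvature_equation} rearranges it to
$$i F_{\omega'} - i F_\omega \ = \ e^2\tau(\omega - \omega') \ = \ e^2(\tau - |\phi|^2)\, \omega,$$
so by the displayed identity the left-hand side is $iF_A$. Since on a Riemann surface every real $(1,1)$-form is uniquely a function times $\omega$, applying $\Lambda_\omega$ sets up a bijection between such 2-form identities and scalar identities, and this last equation is precisely \eqref{vortex_eq1}. The ``degenerate along $D$'' statement is an immediate byproduct of \eqref{holomorphy_eq1}: a holomorphic section vanishing to order $n_j$ at $q_j$ has $|\phi|^2 \sim |z|^{2n_j}$ in a coordinate centered there, and $\omega'$ inherits this behaviour.

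The argument is essentially one computation, so there is no serious analytic obstacle. The only delicate point is bookkeeping around the zeros of $\phi$: in the frame given by $\phi$ itself, the expression $-\partial\bar\partial \log|\phi|^2$ appears singular at each $q_j$, but this is an artifact of the frame and $F_A$ is of course a smooth global 2-form on all of $M$. Restricting the identity to $M \setminus \{q_j\}$, as in the statement, sidesteps this issue entirely; it is nevertheless reassuring that the distributional content of $-\partial\bar\partial \log |\phi|^2$ near $q_j$ is exactly the contribution of $D$ via the Poincaré--Lelong formula, which is consistent with the smooth extension of $F_A$ across $D$.
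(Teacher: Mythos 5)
Your proof is correct and follows essentially the same route as the paper: the whole content is the identity $F_{\omega'} - F_{\omega} = F_A = \bar{\partial}\partial \log |\phi|^2$ away from the zeros of $\phi$ (your $-\partial\bar{\partial}$ is the same operator), after which \eqref{curvature_equation} rearranges to \eqref{vortex_eq1} via $\Lambda_\omega$. The paper's proof is just this observation stated in one line, with the curvature computation deferred to the proof of Theorem \ref{vdm}, so your write-up is simply a more explicit version of the same argument.
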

\begin{proof}
The basic observation is that,  away from the zeroes of $\phi$, the curvatures are related by $F_A = \bar{\partial} \partial \log \vert \phi \vert^2 = F_{\omega'} - F_\omega$. See the proof of Theorem \ref{vdm} for more details.
\end{proof}
\noindent
So the vortex equation is precisely the condition for the conformal transformation \eqref{conf_transformation1} to preserve the form $iF_{\omega} + e^2 \tau \, \omega$. Studying abelian vortices on $(M, \omega)$ is the same thing as studying K\"ahler metrics $\omega'$ that satisfy equation \eqref{curvature_equation}.  Abelian vortices define such degenerate metrics and, conversely, all vortex solutions are given by quotients of the form $\omega' / \omega$.
The well known fact that there exists only one vortex solution $(A, \phi)$ with nowhere vanishing $\phi$, up to gauge transformations, and that this solution has constant norm $\vert\phi\vert^2 = \tau$, just means that $\omega' = \omega$ is the unique non-degenerate solution to the curvature equation \eqref{curvature_equation}. Given an effective divisor $D = \sum n_j \, q_j$ on $M$, the well known fact that there exists only one vortex solution $(A, \phi)$ associated to $D$, up to gauge transformations \cite{Brad, Garcia-Prada, Jaffe-Taubes}, tells us that there exists a unique metric $\omega'$ on the Riemann surface that satisfies \eqref{curvature_equation} and is degenerate at the points $q_j \in M$ with multiplicity $n_j$. 


\subsubsection*{Vortices on hyperbolic surfaces}

Suppose that the initial metric $\omega$ on the surface has constant scalar curvature equal to $-e^2 \tau$.
Equation \eqref{curvature_equation} then reduces to $i F_{\omega'} + e^2 \tau \, \omega' = 0$, whose solutions are again hyperbolic metrics. Thus each vortex solution on $(M, \omega)$ defines through \eqref{conf_transformation1} a degenerate hyperbolic metric $\omega'$ with the same curvature $-e^2 \tau$. Conversely, all vortex solutions on $(M, \omega)$ can be obtained as quotients $\tau\, \omega' / \omega$ of hyperbolic metrics. This is a familiar fact  when $M = \mathbb{H}$ is the hyperbolic half-plane. In this case, all degenerate hyperbolic metrics on $\mathbb{H}$ can be obtained as pullbacks $\omega' = f^\ast \omega$ of the Poincar\'e metric by Blaschke products $f: \mathbb{H} \rightarrow \mathbb{H}$. So all vortex solutions can be explicitly written as quotients $(f^\ast \omega) / \omega$, a fact that was originally noted by Witten \cite{Witten}. When $(M, \omega)$ is a compact hyperbolic surface, Manton and Rink noted that one can also obtain vortex solutions as quotients $(f^\ast \omega_Y) / \omega$, where $(Y, \omega_Y)$ is a second hyperbolic surface and $f : M \rightarrow Y$ is a ramified holomorphic map \cite{Manton-Rink}. Unfortunately these maps are rare, so only isolated and non-explicit vortex solutions can be obtained in this way (an upset that improves marginally if we accept vortices with singularities \cite{Baptista-Biswas}). Our observation here is that, even in the compact case, all vortex solutions can still be written as quotients of hyperbolic metrics on $M$. The difficulty is that the degenerate metrics $f^\ast \omega_Y$ provided by ramified maps represent only a tiny subset of all the degenerate hyperbolic metrics on $M$.

\subsubsection*{Degenerate metrics}

A word now about standard properties of degenerate metrics. Firstly, observe that the Riemannian curvature form $F_{\omega'}$ has a smooth extension to the degenerate points, even though the local 1-forms $A_{\omega'}$ of the Chern connection do not. In fact, around each $q_j \in M$ we can write
\begin{equation} \label{degeneracies}
\omega' = \vert z \vert^{2n_j} \, f_j \, \, \omega \ , 
\end{equation}
where $f_j$ is a smooth and positive function locally defined around $q_j$. It follows that the Levi-Civita (or Chern) connection is given by
\[
A_{\omega'} \ = \  A_\omega \: + \:  n_j \, z^{-1} \dd z \: + \: \partial (\log f_j) \ 
\]
in a local holomorphic trivialization of $TM$. So the connection is singular at the degenerate points and the residues $n_j$ can be probed by integrating $A_{\omega'}$ around small circles centered at the $q_j$'s. On the other hand, the curvature 
\[
F_{\omega'} \ = \  \dd A_{\omega'} \ = \ F_\omega \: + \: \bar{\partial} \partial  (\log f_j) \ 
\]
can be  smoothly extended to the degenerate points $q_j$. Note that the last term in the equation above, although locally exact, is not globally exact, since the functions $f_j$ are not globally defined. In fact, one can check that if $\omega'$ is defined by \eqref{conf_transformation1} and the pair $(A, \phi)$ solves the holomorphy equation \eqref{holomorphy_eq1}, then away from the zeroes of $\phi$ we have
\begin{equation} \label{relation_curvatures}
F_{\omega'} \ =\   F_\omega \ + \ F_A \ ,
\end{equation}
where, again, $F_A$ is the curvature of the $U(1)$-connection $A$. This identity is another way to recognize that $F_{\omega'}$ can be smoothly extended to the whole $M$. However, note that this smooth extension is not globally a curvature form, because the Levi-Civita connection has singularities. In particular, one cannot directly apply Stokes' theorem to $F_{\omega'}$ over domains that contain degenerate points. The Gauss-Bonnet theorem in its usual form is also not applicable to the smooth extension of $F_{\omega'}$, as the integral of identity \eqref{relation_curvatures} yields
\begin{equation} \label{gauss_bonnet1}
\frac{i}{2\pi} \: \int_{M\setminus \{ q_j \}} F_{\omega'} \ = \ (2 - 2 g) \: + \:  \sum_j \, n_j \ , 
\end{equation}
where $g$ is the genus of $M$. 
To remedy these undesirable features, one usually considers an extension of the curvature $F_{\omega'}$ to the degenerate points that is different from the smooth one. This second extension regards the curvature of the degenerate metric as an integration current,  and is defined by
\begin{equation}\label{curvature_current}
F_{\omega'} \ := \  \begin{cases}
                                                \  \dd A_{\omega'}    & \text{\rm over } M \setminus \{ q_j \}  \\
                                                 \  2 \pi i \, n_j \, \delta(q_j)  &  \text{\rm at each point } q_j \ ,
                                          \end{cases}           
\end{equation}
where $ \delta(q_j)$ is the delta function. Thus, at the price of considering distributions, the usual formulae are applicable to the curvature, and we have for instance
\[
 \frac{i}{2\pi} \: \int_{M} F_{\omega'} \ = \ (2 - 2 g) \ .
\]
Defining the curvature of a degenerate metric to have delta functions at the points $q_j$ also makes sense if we think of the degenerate metric as a limit of non-degenerate metrics. In this case one can check that the curvature of the non-degenerate metrics increases at $q_j$ as these metrics tend to zero at $q_j$, and the limit will be a delta function. Moreover, since the curvature of all the non-degenerate metrics satisfies Gauss-Bonnet, it is natural to define a limit curvature $F_{\omega'}$ that also satisfies Gauss-Bonnet.
Observe that if we use integration currents on $M$, the curvature equation \eqref{curvature_equation} should be rewritten as
\begin{equation} \label{curvature_equation_deltas}
i F_{\omega'} \:+ \: e^2 \tau \, \omega' \ = \ i F_{\omega} \: +\:  e^2 \tau \, \omega \:  - \: 2 \pi \, \sum_j n_j \, \delta(q_j)  \ ,
\end{equation}
and this is an equation over the whole surface. Comparing with \eqref{curvature_equation}, we see that the delta functions encode the boundary conditions for $\omega'$ at the degenerate points. They play exactly the same role here as in the original Taubes' equation for abelian vortices \cite{Jaffe-Taubes}.

\subsubsection*{Minimal volume bound}

Integrating the curvature equation \eqref{curvature_equation} over $M \setminus \{q_j\}$, or integrating equation \eqref{curvature_equation_deltas} over the whole $M$, we see that each solution $\omega'$ must satisfy
\[
\text{\rm Vol}(M, \omega') \  = \ \text{\rm Vol}( M,  \omega) \  - \  \frac{2 \pi}{e^2 \tau} \, \sum_j \, n_j \ .
\]
Thus each additional degeneracy (i.e. vortex) reduces the volume of the metric $\omega'$ by $2\pi / (e^2 \tau)$. Since any Riemannian metric with a finite number of degenerate points certainly has positive volume, we also   conclude that a necessary condition for the existence of solutions of the curvature equation is that
\[
2 \pi\, \sum_j \, n_j \ < \ e^2 \tau \, \text{\rm Vol}(M, \omega) \ .
\]
This, of course, is just the standard condition for the existence of vortex solutions associated to a non-zero divisor on $M$ \cite{Brad}.

\begin{remark}
The interpretation of vortices as a deformation of the background Riemannian metric satisfies basic intuitive properties. For instance, it is known that the rescaling factor $\tau^{-1} \vert \phi \vert^2$ tends exponentially fast to the vacuum value $1$ as one gets away from the vortex core, at least for vortices on the flat plane \cite{Jaffe-Taubes}. So the background metric $\omega$ is substantially deformed only in a very localized region around the vortices. Moreover, an application of the maximum principle says that the conformal factor is bounded by  $0 \leq \tau^{-1} \vert \phi \vert^2 \leq 1$ over the entire surface $M$ \cite{Jaffe-Taubes, Brad}. Hence the total deformation of the background metric $\omega $ can be measured by  the diference of volumes $\text{\rm Vol}\ (M, \omega) - \text{\rm Vol}\ (M, \omega')$, which is proportional to the vortex number.

\end{remark}

\subsubsection*{Hidden symmetries and vortex superposition}

The interpretation of abelian vortices as degenerate metrics exposes two symmetries of the vortex equations that are less manifest in their usual form. The first is the fact that the curvature equation \eqref{curvature_equation} is essentially symmetric in $\omega$ and $\omega'$. If $\omega'$ is a solution on the background $(M, \omega)$, then $\omega$ is a solution on the background $(M, \omega')$. To interpret this in terms of the usual vortex equations requires working with meromorphic sections $\phi$. The observation is that if $(L, \phi , A)$ is a solution of \eqref{vortex_eq1} on $(M, \omega)$, then $(L^{-1}, \phi^{-1}, -A)$ is a solution on the background $(M, \tau^{-1} \vert \phi \vert^2\, \omega)$. Meromorphic sections appear because the boundary conditions at the degenerate points make the curvature equation not exactly symmetric in $\omega'$ and $\omega$. This is most manifest in the form \eqref{curvature_equation_deltas} of the equation, where an interchange of the two metrics requires an accompanying change of sign of the divisor in order to preserve the equation.

The second hidden symmetry is a transitivity property. If $\omega'$ is a solution of the curvature equation \eqref{curvature_equation} on the background $(M, \omega)$ and $\omega''$ is a solution on the background $(M, \omega')$, then $\omega''$ is a solution on the background $(M, \omega)$. 
This is evident from
\[
iF_{\omega''} \: + \: e^2 \tau \, \omega'' \ = \ i F_{\omega'} \: + \: e^2 \tau \, \omega' \ = \ i F_{\omega} \: + \:  e^2 \tau \, \omega \ .
\]
In the usual formulation of vortices, this transitivity corresponds to the observation that if
 $(L_1, \phi_1 , A_1)$ is a  vortex solution on $(M, \omega)$ and $(L_2, \phi_2 , A_2)$  is a vortex solution on the degenerate background  $(M,\, \tau^{-1} \vert \phi_1 \vert^2 \, \omega)$, then the tensor product of hermitian bundles $(L_2 \otimes L_1,\, \tau^{-1/2} \phi_2 \, \phi_1 ,  \, A_2 +  A_1)$ is a vortex solution on the original background $(M, \omega)$. It is straightforward to check this directly using equation \eqref{vortex_eq1}. The basic identity is 
 \[
i F_{A_2 + A_1} \ + \  e^2 \big(\tau^{-1} \vert \phi_1 \phi_2 \vert^2  - \tau \big) \omega \ = \ 
i F_{A_2}   \ + \  e^2 \big(\vert \phi_2 \vert^2  -  \tau \big) (\tau^{-1} \vert \phi_1\vert^2 \omega ) \ + \  \Big[  i F_{A_1}   \ + \   e^2 \big(\vert \phi_1 \vert^2   -  \tau \big) \omega  \Big] .
 \]
 This property of vortex solutions is very appealing and, apparently, has not been remarked before. It says that we can construct a $(d_1 + d_2)$-vortex solution on $(M,\omega)$ by first obtaining a $d_1$-vortex solution on $(M,\omega)$, then deforming the background $\omega$ to the degenerate metric $\tau^{-1} \vert \phi_1 \vert^2\, \omega$ dictated by that first solution, and finally solving the equation for $d_2$-vortices on the deformed background. We can also do it the other way around, of course, solving for $d_1$-vortices on the background deformed by $d_2$-vortices. This amounts to a conceptually simple, non-linear rule for superposing vortices. Again, it states that instead of na\"ively multiplying vortex solutions obtained on the same background, the correct prescription to superpose vortices implies using a second solution obtained on the background deformed by the first solution. If one knew how to solve the 1-vortex equation on any background, a recursive application of this rule would yield the $d$-vortex solution on any background.

\subsubsection*{Energy functional}

As mentioned in the introduction, a basic fact about the vortex equations is that they minimize the static energy of  the abelian Higgs model at critical coupling, i.e.  they minimize the energy functional \eqref{energy}. One may thus wonder how to rephrase this in terms of degenerate metrics.
Given an effective  divisor $D = \sum n_j q_j$ on a surface $(M, \omega)$, we define the energy density 
of a K\"ahler metric $\omega'$ that degenerates along $D$ by 
\begin{equation} \label{energy_density}
\mathcal{E}(\omega') \ := \ \frac{1}{2e^2}\, |F_{\omega'} - F_\omega|^2\ +\  2\tau \, |\dd \sqrt{\omega' / \omega} \,|^2\ +\ \frac{e^2 \tau^2}{2} \, |\omega' - \omega|^2 \  .
\end{equation}
This a smooth function on $M \setminus \{q_j \}$. It can be rewritten as
\[
\mathcal{E}(\omega') \ = \ \frac{1}{2 e^2}\,  \big|F_{\omega'} - F_{\omega} - i e^2 \tau (\omega' - \omega) \big|^2 \ + \ i \tau \ast \big[  F_{\omega'} - F_{\omega}  +  \partial \bar{\partial}(\omega' / \omega) \, \big] \ 
\]
over that domain, where $\ast$ is the Hodge operator with respect to $\omega$. Using equation \eqref{gauss_bonnet1}
and Stokes'  theorem, it is clear that the total energy functional satisfies
\[
E(\omega') \ := \ \int_{M \setminus \{q_j\}} \mathcal{E}(\omega')\, \omega\  \geq  \  2 \pi \tau \sum_j\, n_j \ , 
\]
with the equality standing precisely when $\omega'$ satisfies the curvature equation \eqref{curvature_equation} over $M \setminus \{q_j\}$. This is the analog of the usual Bogomolny argument for vortices.

\begin{remark}
At first sight one could be worried that the square root term appearing in the energy density \eqref{energy_density} may lead to singularites at the points where $\omega'$ vanishes. However, a direct calculation shows that if  $\omega'$ vanishes at $q_j$ as $|z|^{2n_j}$, the squared norm $|\dd \sqrt{\omega' / \omega} \,|^2$ is smooth at that point for any integer $n_j \geq 0$.
\end{remark}

\begin{remark}
In this discussion we have fixed the degeneration divisor $D$, have only considered metrics $\omega'$ with the boundary conditions prescribed by $D$, and the integrals are over the domain $M \setminus \text{supp}\, D$. An equivalent approach would be to consider the curvature $F_{\omega'}$ as an integration current over the entire $M$, as in \eqref{curvature_current}, and then encode the boundary conditions in  delta functions appearing in the energy density. In this case one should substitute the term $F_{\omega'} - F_{\omega}$ in the formulae above by $F_{\omega'} - F_{\omega} + 2\pi \delta (D)$ and perform the integrations over the whole $M$.
\end{remark}

\subsubsection*{Metric on the moduli space}

It is well known that for each effective divisor $D = \sum n_j q_j$ on the surface $M$ there is exactly one solution of the vortex equations \eqref{holomorphy_eq1} and \eqref{vortex_eq1}, up to $U(1)$-gauge transformations, such that $D$ is the zero-set divisor of the Higgs field $\phi$. So if we fix the degree of $D$ to be $d$ and let the points $q_j$ vary in $M$, we get a moduli space $\mathcal{M}^d$ of vortex solutions that is isomorphic to the symmetric product $\text{\rm Sym}^d M$ of the surface. In terms of degenerate metrics, $\mathcal{M}^d$ is the space of solutions of the curvature equation \eqref{curvature_equation} such that the K\"ahler metric $\omega'$ has $d$ degenerate points on $M$, counting multiplicities.

The space $\mathcal{M}^d$ has a natural K\"ahler metric induced by its interpretation as a vortex moduli space \cite{Manton-Sutcliffe}. Using the language of \cite[Sect.7]{Bap2010}, the corresponding K\"ahler form $\omega_{\mathcal{M}}$ can be written in terms of the degenerate metrics as
\begin{equation} \label{vortex_metric}
\omega_{\mathcal{M}} \ = \ \frac{-1}{2\, e^2} \, \Big\{  \int_M   \, \partial_z  \Big[ \: \big(\partial_{\bar{u}^k} \log f \big) \: \big(\partial_{\bar{z}} \partial_{u^j} \log f \big)   \: \Big] \:  \dd z \wedge \dd \bar{z}   \: \Big\} \ \dd u^j \wedge \dd \bar{u}^k  \ ,
\end{equation}
where $f (\omega'  , p) :=  (\omega'  / \omega) (p)$ is a non-negative smooth function on the cartesian product $\mathcal{M}^d \times M$, and we chose arbitrary local complex coordinates $z$ on $M$ and $\{u^k\}$ on $\mathcal{M}^d$. Observe that in \eqref{vortex_metric} the integrand 2-form is exact outside the points where $\log f$ explodes, i.e. outside the degenerate points of the metric $\omega'$. By Stokes' theorem, the integral therefore localizes to a sum of integrals around small circles centered at these degenerates points, which of course is just the standard Samols' localization for vortices \cite{Samols, Bap2010}. Note also that due to the presence of the partial derivatives $\partial_{\bar{u}^k}$ and $\partial_{u^j} $, the form $\omega_\mathcal{M}$ does not depend explicitly on the background metric $\omega$ on the surface, though it does depend implicitly through the solutions $\omega'$.

Formula \eqref{vortex_metric} and the superposition rule for vortices imply that the metric $\omega_{\mathcal{M}}$ on the moduli space has an interesting property upon restrictions, which we now explain. Let $\mathcal{M}_p^d \subset \mathcal{M}^d$ be the complex submanifold determined by fixing one vortex at position $p \in M$, i.e. $\mathcal{M}_p^d$ is the subset of unordered multiples in $\text{\rm Sym}^d M$ of the form  $(x_1, \ldots , x_{d-1}, p)$ with varying $x_k \in M$. As a complex manifold, $\mathcal{M}_p^d$ clearly is isomorphic to $\mathcal{M}^{d-1}$. However, the K\"ahler metric  on $\mathcal{M}_p^d$ induced by the embedding in $(\mathcal{M}^d, \omega_{\mathcal{M}})$ is different from the natural metric on $\mathcal{M}^{d-1}$ regarded as the moduli space of $d-1$ vortices living on $(M, \omega)$. In other words, the remaining $d-1$ vortices ``feel" the presence of the vortex fixed at $p$. But how, precisely, does this presence translate into the metric on $\mathcal{M}_p^d$? 
In view of superposition rule discussed before, a natural guess is that the vortex fixed at $p$ deforms the background metric $\omega$ on the surface $M$ to another metric $\omega''$, a metric that is degenerate at $p$, and then the remaining $d-1$ vortices behave as free vortices on the surface $(M, \omega'')$. To be more precise:

\begin{theorem}
Let $\omega''$ be the unique solution of the curvature equation \eqref{curvature_equation} on $(M, \omega)$ with a single simple degeneracy at the point $p \in M$. Then the submanifold $\mathcal{M}_p^d$  equipped with the metric inherited from $(\mathcal{M}^d, \omega_{\mathcal{M}})$ is isometric to the moduli space $\mathcal{M}^{d-1}$ of $d-1$ vortices living on the background surface $(M, \omega'')$. 
\end{theorem}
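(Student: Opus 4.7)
The plan is to use the superposition rule from the previous subsection to identify the two moduli spaces set-theoretically, and then exploit the fact that formula \eqref{vortex_metric} depends only on mixed derivatives of $\log f$ in the moduli directions to show that the $L^2$-metric is transported intact under this identification.

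First I would set up the bijection. A point of $\mathcal{M}_p^d$ corresponds to a divisor $D = p + x_1 + \cdots + x_{d-1}$ on $(M,\omega)$, and the associated vortex solution gives a degenerate metric $\omega'$ that solves the curvature equation \eqref{curvature_equation} on $(M,\omega)$ with degeneracies at $p, x_1, \ldots, x_{d-1}$. By the transitivity of the curvature equation and the uniqueness part of Proposition~2.1, $\omega'$ is also the unique solution of the curvature equation on the deformed background $(M,\omega'')$ with degeneracies at $x_1, \ldots, x_{d-1}$, since $\omega''$ itself solves \eqref{curvature_equation} on $(M,\omega)$ with a simple degeneracy at $p$. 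This identifies $\mathcal{M}_p^d$ with $\mathcal{M}^{d-1}(M,\omega'')$ as complex manifolds, with the parameters $u^j$ corresponding to the free positions $x_1, \ldots, x_{d-1}$.

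Next I would compare the two K\"ahler forms using \eqref{vortex_metric}. On $\mathcal{M}^d$ one has $f = \omega'/\omega$; on $\mathcal{M}^{d-1}(M,\omega'')$ one has $\tilde f := \omega'/\omega''$. Writing $h := \omega''/\omega$, which depends only on the fixed point $p$ and on $z$ but not on the moduli coordinates $u^j$, one obtains the decomposition
\[
\log f \ = \ \log \tilde f \: + \: \log h \ .
\]
Since $\partial_{u^j} \log h = \partial_{\bar u^k} \log h = 0$, the mixed derivatives appearing in the integrand of \eqref{vortex_metric} satisfy $\partial_{\bar u^k} \log f = \partial_{\bar u^k} \log \tilde f$ and $\partial_{\bar z}\partial_{u^j} \log f = \partial_{\bar z}\partial_{u^j} \log \tilde f$. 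The remaining $\partial_z$ and the measure $\dd z \wedge \dd \bar z$ are purely base-manifold quantities which, as the paragraph after \eqref{vortex_metric} stresses, do not involve the background metric explicitly. Substituting, one sees that the restriction of $\omega_\mathcal{M}$ to $\mathcal{M}_p^d$ coincides term-by-term with the expression that \eqref{vortex_metric} gives for the K\"ahler form of $\mathcal{M}^{d-1}$ over the background $(M,\omega'')$.

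The main obstacle I anticipate is justifying the use of \eqref{vortex_metric} on the deformed background $(M,\omega'')$, since $\omega''$ itself degenerates at $p$ and so the derivation of the Samols-type localization formula in \cite{Bap2010} is not a priori applicable there. To handle this I would argue that, because the localization contributions come from small circles around the degeneracy points of $\omega'$ --- namely $x_1, \ldots, x_{d-1}$, all disjoint from $p$ for a generic point of $\mathcal{M}_p^d$ --- the singular locus of $\omega''$ never enters the computation, and the formula extends by continuity (and by density of this generic locus) to all of $\mathcal{M}^{d-1}(M,\omega'')$. Once this technical point is settled, the proof reduces to the elementary cancellation described above.
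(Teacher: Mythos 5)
Your proposal is correct and follows essentially the same route as the paper: the key step in both is the factorization $\omega'/\omega = (\omega'/\omega'')\cdot(\omega''/\omega)$ together with the observation that $\omega''/\omega$ is independent of the moduli coordinates $u^1,\ldots,u^{d-1}$, so that the mixed derivatives in \eqref{vortex_metric} are unchanged. Your additional remarks --- spelling out the bijection via transitivity and uniqueness, and justifying the use of \eqref{vortex_metric} over the degenerate background $(M,\omega'')$ by localization away from $p$ --- are sensible elaborations of points the paper leaves implicit, not a different argument.
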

\begin{proof}
Take a local chart $\{u^k \}$ of the moduli space $\mathcal{M}^d$ adapted to the submanifold $\mathcal{M}_p^d$. This means that in this chart $\mathcal{M}_p^d$ is determined by the equation $u^d = 0$ and  $\{u^1 , \ldots , u^{d-1}\}$ are good complex coordinates on $\mathcal{M}_p^d$. For any solution $\omega' \in \mathcal{M}^d_p$ of the curvature equation on $(M, \omega)$, we can always factorize
\[
\frac{\omega'}{\omega} \ = \ \frac{\omega'}{\omega''} \: \frac{\omega''}{\omega} \ .
\]
The crucial observation is that the factor $\omega'' / \omega$ is constant on the submanifold $\mathcal{M}_p^d$, it does not depend on the positions of the remaining $d-1$ vortices. So for the coordinates $\{u^1 , \ldots , u^{d-1}\}$ we have
\[
\partial_{u^j} \log (\omega' / \omega) \ = \ \partial_{u^j} \log (\omega' / \omega'') \ 
\]
over $\mathcal{M}_p^d$, and the same applies to the anti-holomorphic derivatives. Thus the restriction of the K\"ahler form \eqref{vortex_metric} to the submanifold $\mathcal{M}_p^d$ is given by the same formula \eqref{vortex_metric}, except that we only use the coordinates $\{u^1 , \ldots , u^{d-1}\}$ and the function $f$ is now $\omega' / \omega''$. This coincides with the formula for the K\"ahler form on the moduli space $\mathcal{M}^{d-1}$ of $d-1$ vortices living on the background surface $(M, \omega'')$.
\end{proof}

There is an obvious extension of this property to the lower dimensional submanifolds $\mathcal{M}_{p_1,  \ldots , p_l}^d$ of the moduli space $\mathcal{M}^{d}$ determined by fixing the positions of $l < d$ vortices. It says that  $\mathcal{M}_{p_1,  \ldots , p_l}^d$ equipped with the metric inherited from $(\mathcal{M}^d, \omega_{\mathcal{M}})$ is isometric to the moduli space $\mathcal{M}^{d-l}$ of $d-l$ vortices living on the background surface $(M, \omega'')$, where this time $\omega''$ is the unique solution of the curvature equation on $(M, \omega)$ with simple degeneracies at the points $p_1, \ldots , p_l$.

\subsubsection*{Vortices with parabolic singularities}

So far we have only considered degenerate metrics $\omega'$ on $M$ that vanish with a positive integral power $\vert z \vert^{2n_j}$ at the degenerate points $q_j$. However, pretty much the whole story can be extended to metrics that vanish as $\vert z \vert^{2 \alpha_j}$ for real numbers $\alpha_j > 0$. These correspond to abelian vortices on $(M, \omega)$ with parabolic singularities.

Suppose that we are given a divisor $\sum \alpha_ j \, q_j$ on $M$ with positive real coefficients. Denote by $[\alpha_j] \in \mathbb{Z}$ the integral part of $\alpha_j$, and call $L \rightarrow M$ the complex line bundle of degree $\sum_j [\alpha_ j]$.  Let $h$ be a hermitian metric on $L$ with parabolic singularities of weight $\alpha_ j - [\alpha_ j]$ at the points $q_j$. This means that if $s$ is a smooth local section of $L$ that does not vanish at $q_j$, the squared norm of $s$ is of the form
\[
|s|_h^2 \ = \ |z|^{2(\alpha_ j - [\alpha_ j])} \, f_j (z, \bar{z}) \  
\]
for some smooth and positive function $f_j$ defined around $q_j$. So $h$ is degenerate at the points $q_j$, and all unitary connections on $(L,h)$ will have logarithmic singularities at those points. However, this degeneracy does not affect the existence of vortex solutions. There still exists a smooth section $\phi$ of $L$ and a unitary connection $A$ that solve the vortex equations \eqref{holomorphy_eq1} and \eqref{vortex_eq1} over $M\setminus \{q_j\}$, and such that $\phi$ vanishes precisely at points $q_j$ with multiplicity $[\alpha_j]$. Moreover, this pair $(A, \phi)$ is unique up to $U(1)$-gauge transformations on $M$ \cite{Biquard-Garcia-Prada}. Observe that for this solution the squared norm $\vert \phi \vert^2_h$ vanishes as $|z|^{2 \alpha_ j}$ at the point $q_j$. Therefore the Riemannian metric $\omega':= \tau^{-1} \vert \phi \vert^2_h \, \omega$ satisfies the curvature equation \eqref{curvature_equation} over $M\setminus \{q_j\}$ and is degenerate along the divisor $\sum \alpha_ j q_j$, as required. All in all, we see that studying vortices on $(M, \omega)$ with parabolic singularities is the same thing as studying solutions $\omega'$ of \eqref{curvature_equation} that vanish with positive real weights. All the formulae presented above remain valid if we substitute the integers $n_j$ by the real weights $\alpha_j$.

\begin{remark}
The existence and uniqueness of vortex solutions, and hence of solutions to \eqref{curvature_equation} with prescribed degeneracies,  has also been extended to the case where the background metric $\omega$ is allowed to have degeneracies or conical singularies, meaning that at a finite number of points $p_k \in M$, possibly coincident with the $q_j$'s,  the background K\"ahler form $\omega$ behaves as $\vert z \vert^{2\beta_k}$ with real weights $\beta_k > -1$ \cite{Baptista-Biswas}.
\end{remark}

\section{Vortices in higher dimensions}

In the previous Section we have seen how the vortex equation \eqref{vortex_eq1} can be regarded as the condition for the conformal transformation \eqref{conf_transformation1} to preserve the form $i F_\omega + e^2 \tau \omega$. Observe that the metric $\omega$ plays here a double role, both as a Riemannian metric on $M$ and as a hermitian metric on $TM$ with curvature $F_\omega$. If we separate these roles and substitute $TM$ by an arbitrary holomorphic vector bundle, it becomes clear that the vortex equation is in fact the condition for invariance of the general Hermitian-Einstein equation under a  conformal transformation.
Moreover, this is true on any Hermitian manifold $M$, not just on Riemann surfaces.

Let $(M, \omega)$ be a compact Hermitian manifold and let $L \rightarrow M$ be a hermitian line bundle. Suppose that we are given a connection $A$ and a non-trivial section $\phi$ of $L$ such that
\begin{equation} \label{holomorphy_condition2}
\bar{\partial}_A \phi \ = \ 0 \ , \qquad \quad F_A^{0,2} \ = \ 0 \ .
\end{equation}
Then $\phi$ is a holomorphic section with respect to the holomorphic structure on $L$ induced by the connection $A$. In particular, its zero set determines an effective divisor $D = \sum_j n_j \, Z_j$ of analytic hypersurfaces on $M$. The squared norm $\vert \phi \vert^2$ is a smooth function on M that vanishes along the support of $D$. Now let $V \rightarrow M$ be a holomorphic vector bundle equipped with a hermitian metric $f$. We can consider the double rescaling
\begin{equation} \label{conf_transformation2}
f' \ := \ \frac{1}{\tau'} \, |\phi|^{2\alpha} \ f \qquad \text{and} \qquad  \omega' \ :=  \ \frac{1}{\tau} \, \vert \phi \vert^2 \ \omega \ 
\end{equation}
for any exponent $\alpha \neq 0$.
Then $\omega'$, for instance, is a smooth 2-form on $M$ that vanishes at each hypersurface $Z_j$ as $|z|^{2 n_j}$, where $z$ is a complex coordinate on $M$ such that $Z_j$ is locally defined by the equation $z=0$. In the language of Section 2, the degeneracies of $\omega'$ and $f'$ are prescribed by the divisors $D$ and $\alpha D$, respectively. Once again, our question is the following: suppose that the pair $(A, \phi)$ satisfies the vortex equation \eqref{vortex_eq1}. What does this imply for the degenerate metrics $f'$ and $\omega'$?
\begin{theorem} \label{vdm}
The pair $(A, \phi)$ satisfies the vortex equation \eqref{vortex_eq1}
on $M$ if and only if the metrics $\omega'$ and $f'$ defined by the conformal transformation \eqref{conf_transformation2} satisfy the equation
\begin{equation} \label{curvature_equation2}
\big(\, i \Lambda_{\omega'} F_{f'}\: + \: e^2\, \alpha \, \tau \, \big) \, \omega' \ = \ \big(\, i \Lambda_\omega F_{f} \: + \: e^2\, \alpha\, \tau  \, \big)\, \omega \ 
\end{equation}
over $M \setminus \cup_j Z_j$, where $F_{f}$ denotes the curvature of the Chern connection on $(V, f)$.
\end{theorem}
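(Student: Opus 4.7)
The plan is to reduce the equivalence to the scalar vortex equation \eqref{vortex_eq1} by tracking how each side of \eqref{curvature_equation2} transforms under the two conformal rescalings in \eqref{conf_transformation2}, working over the open set $M \setminus \cup_j Z_j$ where all quantities are smooth.

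First I would record the Chern curvature of $(L, |\cdot|^2)$ in terms of $\phi$. Away from its zero set, the conditions \eqref{holomorphy_condition2} make $\phi$ a local holomorphic frame of $L$, in which the hermitian metric is simply the function $|\phi|^2$, and the standard local formula for the Chern curvature of a line bundle yields $F_A = \bar{\partial}\partial \log |\phi|^2$ in the sign convention already used in the proof of the earlier Proposition. Next I would derive the corresponding transformation of $F_f$: multiplying a hermitian metric on any holomorphic bundle by a positive function $u$ adds $\bar{\partial}\partial \log u \cdot \mathrm{Id}_V$ to the Chern curvature (a direct calculation with the connection $1$-form $H^{-1}\partial H$ in a local holomorphic frame). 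Applied to $f' = (\tau')^{-1} |\phi|^{2\alpha} f$ the constant $\tau'$ drops out, and this gives
\[
F_{f'} \ = \ F_f \ + \ \alpha\, F_A \cdot \mathrm{Id}_V .
\]

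Then I would record the analogous transformation of the Lefschetz contraction under $\omega' = u\, \omega$ with $u = \tau^{-1}|\phi|^2$. Since $\omega'^n = u^n \omega^n$ and $\omega'^{n-1} = u^{n-1} \omega^{n-1}$, the identity $n\, \eta \wedge \omega^{n-1} = (\Lambda_\omega \eta)\, \omega^n$ for any $(1,1)$-form $\eta$ yields $\Lambda_{\omega'} = u^{-1} \Lambda_\omega$ on $(1,1)$-forms, and this extends trivially to $\mathrm{End}(V)$-valued $(1,1)$-forms since $\Lambda$ acts only on the form component. Combining with the previous display, the left-hand side of \eqref{curvature_equation2} becomes
\[
\big(\, i\, \Lambda_\omega F_f \ +\ i\alpha\, \Lambda_\omega F_A \cdot \mathrm{Id}_V \ +\ e^2 \alpha\, |\phi|^2 \cdot \mathrm{Id}_V \,\big)\, \omega ,
\]
and subtracting the right-hand side leaves $\alpha \big( i\, \Lambda_\omega F_A + e^2(|\phi|^2 - \tau) \big) \mathrm{Id}_V \cdot \omega$. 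Since $\alpha \neq 0$, this vanishes if and only if $(A, \phi)$ solves \eqref{vortex_eq1}, which proves both directions of the equivalence.

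The main obstacle is really just careful bookkeeping: confirming that $F_A = \bar{\partial}\partial \log |\phi|^2$ is consistent with the paper's signs in \eqref{vortex_eq1}, and that the single conformal factor $u = \tau^{-1}|\phi|^2$ simultaneously governs the change in $F_{f'}$ (through its appearance in $f'$ when $\alpha = 1$, with a power $\alpha$ in the general case) and the change in $\Lambda_{\omega'}$ (through its appearance in $\omega'$), so that the two effects combine neatly to isolate the vortex equation multiplied by $\alpha\,\mathrm{Id}_V \cdot \omega$.
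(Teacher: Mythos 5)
Your proposal is correct and follows essentially the same route as the paper: both establish $F_A = \bar{\partial}\partial\log|\phi|^2$ away from the zero set, compute $F_{f'} = F_f + \alpha F_A\,\mathrm{I}$ from the conformal rescaling of the hermitian metric, and observe that the contraction rescales so that $(\Lambda_{\omega'}F_{f'})\,\omega' = (\Lambda_\omega F_{f'})\,\omega$, isolating $\alpha$ times the vortex equation. Your version merely makes the $\Lambda_{\omega'} = u^{-1}\Lambda_\omega$ step explicit where the paper absorbs it into its final displayed identity.
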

\noindent
In other words, the abelian vortex equation \eqref{vortex_eq1} is precisely the condition for invariance of the form 
$\big(\, i \Lambda_\omega F_{f} \: + \: e^2 \alpha \, \tau  \, \big)\, \omega$
 under the degenerate conformal transformation \eqref{conf_transformation2}.  The existence of smooth vortex solutions has been proved by Bradlow in the case of K\"ahler $(M, \omega)$ \cite{Brad}, and by Lupascu in  the case of Hermitian manifolds \cite{Lupascu}.

 \begin{remark}
 Choose $V=TM$, $f = \omega$ and $\alpha =1$.  Observe that  
 if the initial metric $\omega$ satisfies the Hermitian-Einstein condition $i \Lambda_\omega F_{\omega}  + e^2  \tau \text{I} = 0$, then each vortex solution on $(M, \omega)$ gives us a new degenerate metric $\omega'$ on $M$ that satisfies the same condition. Taking the trace in equation \eqref{curvature_equation2}, we see that if the initial metric $\omega$ has constant scalar curvature $s_\omega =  \text{Tr}(i\Lambda_\omega F_{\omega}) = - e^2 \tau \dim_\CC M$ (not necessarily being Hermitian-Einstein),  then each vortex solution on $(M, \omega)$ gives us a new degenerate metric $\omega'$ with the same scalar curvature.
 However, note that if $(M, \omega)$ is a higher-dimensional K\"ahler manifold, the rescaled forms $\omega'$ are no longer $\dd$-closed for non-trivial $\phi$, and hence will only define Hermitian metrics on $M$.
\end{remark}

\begin{remark}
Just as in the case of Riemann surfaces, if we regard the curvature $F_{f'}$ of a degenerate metric as an integration current, the domain of equation \eqref{curvature_equation2} can be extended to the entire manifold $M$. In this case one should add a delta function term the curvature equation, a term that encodes the boundary conditions in \eqref{curvature_equation2}. The equation over $M$ then becomes
\begin{equation} \label{curvature_equation_deltas2}
\big(\, i \Lambda_{\omega'} F_{f'}\: + \: e^2\, \alpha \, \tau \, \big) \, \omega' \ = \ \big(\, i \Lambda_\omega F_{f} \: + \: e^2\, \alpha\, \tau  \, \big)\, \omega \  - \ 2 \pi \, \delta_D \ ,
\end{equation}
where $\delta_D$ denotes the current of integration over the divisor $D$. That this is the correct term to add, is a consequence of the Poincar\'e-Lelong formula
$
\bar{\partial} \partial (\log \vert \phi \vert^2 ) \ = \ F_A \ + \ 2\pi i \delta_D$ 
applied to the line bundle $L$.
\end{remark}

\begin{proof}[{\bf Proof of Theorem \ref{vdm}}]
Let $h$ denote the background hermitian metric on $L$. Since the pair $(A, \phi)$ satisfies equations \eqref{holomorphy_condition2}, there exists a holomorphic structure on $L$ such that $A$ coincides with the Chern connection and $\phi$ is a holomorphic section. In a local holomorphic trivialization of $L$, the curvature of the Chern connection $A$ is given by $F_A = \bar{\partial} \partial (\log h)$. Since $\phi$ is holomorphic, away from its  vanishing set we have that $\bar{\partial} \partial \log ( \phi \bar{\phi}) = 0$. Hence on $M \setminus \cup_j Z_j$ the curvature is given by $F_A = \bar{\partial} \partial (\log \vert \phi \vert^2_h)$. On the other hand, in a local holomorphic trivialization of the vector bundle $V$, the curvature of the Chern connection is given by the analogous formula $F_f = \bar{\partial} [ (\partial f) f^{-1} ]$, where the hermitian metric $f$ should be locally regarded as a square matrix of dimension equal to the rank of $V$ \cite[p. 256]{Chern-Chen-Lam}. In particular
\begin{align*}
F_{f'} \: - \: F_f \ &= \  \bar{\partial} \big[  \, (\partial f') (f')^{-1} \, - \,  (\partial f) f^{-1} \,  \big] \ = \ 
 \bar{\partial} \big[\, (\partial f') f^{-1} (f' f^{-1})^{-1} \, + \, (f' f^{-1})^{-1} f' (\partial f^{-1}) \, \big]  \\
 &= \ \bar{\partial} \big\{ \, [ \, (\partial f') f^{-1} \, +\, f' (\partial f^{-1}) \,  ] \, (f' f^{-1})^{-1} \, \big\} \ = \ 
 \bar{\partial} \partial \, \log\, (f' f^{-1}) \ = \ \alpha \, F_A \, \text{I} \  ,
\end{align*}
where $\text{I}$ stands for the identity automorphism of $V$ and we have used that $f' f^{-1} = \vert \phi \vert^{2 \alpha} \, (\tau')^{-1} \, \text{I}$ commutes with all matrices. Thus on $M \setminus \cup_j Z_j$ we have that 
\[
\Big[ \: i \Lambda_\omega  F_A  \: + \:  e^2 \big(   \vert \phi \vert^2 - \tau  \big)  \Big] \, \alpha \,  \omega \ = \  i (  \Lambda_{\omega'}  F_{f'})\, \omega' \: - \:  i (  \Lambda_{\omega}  F_{f})\, \omega \: + 
\:  e^2 \, \alpha \, \tau (\omega' - \omega) \ ,
\]
and the result follows.
\end{proof}

\subsubsection*{Vortex superposition and the energy functional}

The curvature equation \eqref{curvature_equation2} has the same symmetry and transitivity properties that were discussed in the case of vortices on Riemann surfaces. It follows that the superposition rule of Section 2 is still valid for abelian vortices over higher-dimensional $M$. In the traditional form of the vortex equations, this follows from the Leibniz rule
\[
\bar{\partial}_{A_2 + A_1} ( \tau^{-1/2}\, \phi_2 \, \phi_1) \ = \   \tau^{-1/2}\, \phi_2\, \bar{\partial}_{A_1} \phi_1
\ + \ \tau^{-1/2}\, \phi_1\, \bar{\partial}_{A_2} \phi_2  \ , 
\]
from the $A$-linearity of the curvature of abelian connections
\[
F_{A_2 + A_1}^{\, 0,2} \ = \ F_{A_2 }^{\, 0,2} \ + \ F_{A_1 }^{\, 0,2} \ ,
\]
and from the identity
\begin{align*}
\Big[ i \Lambda_{\omega} F_{A_2 + A_1} \ + \  e^2 \big(\tau^{-1} \vert \phi_1 \phi_2 \vert^2 \, - \, \tau \big)  \Big] \omega \ = \ 
  \Big[ i \Lambda_{\omega'} F_{A_2}   \ &+ \  e^2 \big(\vert \phi_2 \vert^2  \, - \, \tau \big)  \Big]\omega'  \ + \\
  &+\  \Big[  i \Lambda_\omega F_{A_1}   \ + \   e^2 \big(\vert \phi_1 \vert^2  \, - \, \tau \big)  \Big] \omega \, ,
\end{align*}
where $\omega'$ denotes the deformed hermitian metric $\tau^{-1} \vert \phi_1\vert^2 \omega$.  So if $D_1$ and $D_2$ are two effective divisors on a Hermitian manifold $(M,\omega)$,  we can construct the vortex solution associated to $D_1 + D_2$ by first obtaining the solution $(A_1, \phi_1)$ on $(M,\omega)$ associated to $D_1$, then deforming the background $\omega$ to the degenerate metric $\omega' = \tau^{-1} \vert \phi_1 \vert^2\, \omega$ dictated by that first solution, and finally obtaining the $D_2$-vortex solution on the new background.

One may wonder if this superposition rule for BPS vortices could be a reflex of  a stronger structural symmetry of the theory, for instance a decomposition rule of the energy functional when evaluated on composed fields like $(A_1 + A_2, \, \tau^{-1/2} \phi_1 \phi_2)$. Unfortunately such a strong symmetry does not seem to hold in general.  However, it may be worth to present a few observations in this direction.  Start by recalling the energy functional of the abelian Higgs model defined in \eqref{energy}.
A version of the  Bogomolny argument \cite{Brad}  says that this functional can be rewritten as
\begin{equation}\label{bogomolny_argument}
E(A, \phi, \omega) \ = \  T(A, \phi , \omega ) \ + \ \int_M {\mathcal{E}}(A, \phi, \omega) \  ,
\end{equation}
where we have separated the non-negative density
\[
{\mathcal{E}}(A, \phi, \omega) \ :=  \  \Big\{ \ \frac{1}{2e^2} \, \big\vert i \Lambda_\omega F_A  + e^2 \big( \big\vert \phi \big\vert^2 - \tau \big) \big\vert^2_\omega \ + \ 2\,  \big\vert \bar{\partial}_A \phi \big\vert^2_\omega \ +\ \frac{2}{e^2} \, \big\vert F_A^{0,2}  \big\vert^2_\omega \ \Big\} \  \frac{\omega^{m}}{m!}
\]
from the remaining integral
\[
T(A, \phi, \omega) \ :=  \ \int_M \ i \tau\, F_A \wedge \frac{\omega^{m-1}}{(m-1)!} \ +\ \frac{1}{2e^2}\, F_A \wedge F_A \wedge \frac{\omega^{m-2}}{(m-2)!} \ - \  i \dd \langle \phi, \dd_A \phi \rangle  \wedge \frac{\omega^{m-1}}{(m-1)!}
 \ .
 \]
(Here the inner product $\langle \phi, \dd_A \phi \rangle =  h \, \bar{\phi}\, \dd_A \phi$ uses the background hermitian metric $h$ on $L$.) The non-negative density ${\mathcal{E}}$ vanishes precisely when the vortex equations are satisfied. The integral $T$ is independent of the fields $(A, \phi)$ whenever the form $\omega$ is closed, i.e. whenever the background $(M, \omega)$ is K\"ahler.  The well known conclusion is that for K\"ahler $\omega$ the field configurations that minimize the total energy $E(A, \phi, \omega)$ are precisely the vortex solutions. For a general Hermitian background, a priori this need not be true.

Our first observation follows from a direct computation using the definition of ${\mathcal{E}}$.

\begin{proposition}
Let $(A_1, \phi_1)$ be a vortex solution on a Hermitian manifold $(M, \omega)$ and, as usual,  let $\omega'$ denote the deformed metric $\tau^{-1} \vert \phi_1 \vert^2\, \omega$. Then we have that
\begin{equation} \label{density_decomposition}
{\mathcal{E}}(A + A_1, \,  \tau^{-1/2} \phi \, \phi_1 , \,\omega) \ = \  \big(\tau^{-1} \vert \phi_1 \vert^2 \big)^{2-m} \   {\mathcal{E}}(A, \phi, \omega')
\end{equation}
for arbitrary fields $(A, \phi)$. 
\end{proposition}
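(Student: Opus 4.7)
The plan is to expand $\mathcal{E}(A + A_1,\, \tau^{-1/2}\phi\,\phi_1,\, \omega)$ using its definition and reduce each of the three non-negative contributions separately, using (a) the algebraic identities already recorded in the superposition paragraphs, together with the vortex property of $(A_1,\phi_1)$, and (b) the conformal rescaling behaviour of the various norms and of $\Lambda$ under $\omega' = \lambda\,\omega$, where $\lambda := \tau^{-1}|\phi_1|^2$ is a positive function on $M\setminus\cup_j Z_j$. Throughout, the key bookkeeping facts will be $(\omega')^m = \lambda^m\,\omega^m$, $\Lambda_{\omega'} = \lambda^{-1}\Lambda_\omega$, and the rule that a $(p,q)$-form's squared pointwise norm gets rescaled by $\lambda^{-(p+q)}$ when one passes from $\omega$ to $\omega'$.

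For the first (curvature-Higgs) term I would invoke the identity displayed just before the proposition, namely
\[
\bigl[\,i\Lambda_\omega F_{A+A_1} + e^2(\tau^{-1}|\phi\phi_1|^2 - \tau)\,\bigr]\omega
= \bigl[\,i\Lambda_{\omega'}F_A + e^2(|\phi|^2-\tau)\,\bigr]\omega' + \bigl[\,i\Lambda_\omega F_{A_1} + e^2(|\phi_1|^2 - \tau)\,\bigr]\omega.
\]
The vortex property of $(A_1,\phi_1)$ kills the last bracket; dividing by $\omega$ gives the pointwise relation
\[
i\Lambda_\omega F_{A+A_1} + e^2(\tau^{-1}|\phi\phi_1|^2 - \tau) \ = \ \lambda\,\bigl[\,i\Lambda_{\omega'}F_A + e^2(|\phi|^2 - \tau)\,\bigr].
\]
Squaring, and then rewriting $\omega^m = \lambda^{-m}(\omega')^m$, produces exactly the factor $\lambda^{2-m}$ in front of the first summand of $\mathcal{E}(A,\phi,\omega')$.

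For the second (holomorphy) term, the Leibniz rule together with $\bar{\partial}_{A_1}\phi_1 = 0$ gives $\bar{\partial}_{A+A_1}(\tau^{-1/2}\phi\phi_1) = \tau^{-1/2}\phi_1\,\bar{\partial}_A\phi$, so its squared pointwise norm in $\omega$ differs from $|\bar{\partial}_A\phi|^2_\omega$ by a factor $\lambda$. Re-expressing $|\bar{\partial}_A\phi|^2_\omega = \lambda\,|\bar{\partial}_A\phi|^2_{\omega'}$ (the $(p,q)=(0,1)$ rescaling rule) and $\omega^m = \lambda^{-m}(\omega')^m$ assembles once more into $\lambda^{2-m}$. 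The third term is the shortest: abelian additivity of the curvature and $F_{A_1}^{0,2}=0$ give $F_{A+A_1}^{0,2} = F_A^{0,2}$, and the same type of conformal bookkeeping with the $(p,q)=(0,2)$ rescaling and the volume factor produces $\lambda^{2-m}$ a third time. Summing the three contributions yields the claimed identity.

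The step that will require the most care is the first one — specifically, passing from the $2m$-form identity cited above to its pointwise scalar version by cancelling $\omega$, and keeping a clean record of which $\Lambda$-operator and which pointwise norm goes with which metric. Once that is done, the remaining bookkeeping is a routine application of the conformal weight counting that is standard for Hermitian manifolds.
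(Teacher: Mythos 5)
Your computation is correct and is precisely the ``direct computation using the definition of $\mathcal{E}$'' that the paper alludes to without writing out: each of the three non-negative summands picks up the factor $\lambda^{2-m}$ (with $\lambda=\tau^{-1}|\phi_1|^2$) via the displayed superposition identity, the Leibniz rule with $\bar{\partial}_{A_1}\phi_1=0$, and $F_{A+A_1}^{0,2}=F_A^{0,2}$, combined with the standard conformal weights of $\Lambda_\omega$, the $(p,q)$-form norms, and the volume form. No gaps; this matches the paper's intended argument.
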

\noindent
This is a remarkably simple transformation rule. In particular, if the fields $(A, \phi)$ are a vortex solution on $(M, \omega')$, the density ${\mathcal{E}}(A, \phi, \omega')$ vanishes everywhere, and so does the density on the left-hand-side of \eqref{density_decomposition}, which in turn means that the superposed fields $(A + A_1, \,  \tau^{-1/2} \phi \, \phi_1)$  satisfy the vortex equations on $(M, \omega)$. So the vortex superposition rule can be directly read from  \eqref{density_decomposition}.

Moreover, notice the special stand of complex dimension $m=2$ in identity \eqref{density_decomposition} . In this dimension the operation of superposing a vortex solution to arbitrary fields can be entirely absorbed in a deformation of the background metric $\omega$, as far as the density ${\mathcal{E}}$ is concerned. Unfortunately the same statement does not hold for the full energy functional $E(A, \phi, \omega)$. The reason is that the deformed metric $\omega'$ is not K\"ahler, and hence the term $T(A, \phi, \omega')$ is no longer topological, i.e. it depends on the fields $(A, \phi)$. This means that the energy $E$ is no longer the same, up to a constant, as the integral of ${\mathcal{E}}$. However, one can define a ``corrected" energy $\hat{E}$ that does transform nicely upon composition with vortex solutions. Still in complex dimension $m=2$, put
\begin{equation*}
\hat{E}(A, \phi, \omega) \ := \  E(A, \phi, \omega) \ + \ \int_M -i \tau\, F_A \wedge \omega\  + \  i\, \dd \langle \, \phi , \dd_A \phi \rangle \wedge \omega \ .
\end{equation*}
Bearing in mind identities \eqref{bogomolny_argument} and \eqref{density_decomposition}, it is manifest that:
\begin{itemize}
\item[\bf (i)] On a fixed compact K\"ahler background $(M, \omega)$, the functionals $E$ and $\hat{E}$ differ only by a  topological constant.
\item[\bf (ii)]  On a general compact Hermitian background $(M, \omega)$, given a vortex solution $(A_1, \phi_1)$, the functional $\hat{E}$ satisfies
\begin{equation}
\hat{E}(A + A_1, \,  \tau^{-1/2} \phi \, \phi_1 , \,\omega) \ =  \   \hat{E}(A, \phi, \omega')
\end{equation}
for arbitrary fields $(A, \phi)$.
\end{itemize}
This means that in the classical theory defined by $\hat{E}$, composing the fields with a vortex solution is fully equivalent to deforming the background metric.

\subsubsection*{The modified vortex equation}

As before, let $L \rightarrow M$ be a hermitian line bundle and let $(A, \phi)$ be a pair that satisfies the holomorphy equations \eqref{holomorphy_condition2}. We assume that $\phi$ is non-trivial, so that its zero set determines an effective divisor $D = \sum_j n_j \, Z_j$ on $M$.  A modified version of the vortex equation has been often considered, namely
\begin{equation} \label{vortex_eq3}
i \Lambda_\omega F_A \: + \:  \frac{1}{\dim_\CC M} \, \Big( \,\vert \phi \vert^2 \: + \: s_\omega \, \Big) \ = \ 0  \ ,
\end{equation}
where $s_\omega = i \, \text{Tr}( \Lambda_\omega F_\omega)$ denotes the scalar curvature of $(M, \omega)$. When $M$ is a K\"ahler surface, this equation is equivalent to a perturbation of the Seiberg-Witten equations \cite{Bradlow-Garcia-Prada}. For compact K\"ahler $(M, \omega)$, it can be shown that the existence of solutions to \eqref{vortex_eq3} depends only on the average value 
\[
\bar{s}_\omega \ := \ \frac{1}{\text{\rm Vol}\, (M, \omega)}\,  \int_{(M, \omega)} s_\omega
\]
of the scalar curvature of $\omega$. Solutions $(A, \phi)$ of \eqref{vortex_eq3} exist if and only if the vortex equation \eqref{vortex_eq1} with $\tau = - \bar{s}_\omega$ has solutions \cite{Bradlow-Garcia-Prada}. In particular, we must have $\bar{s}_\omega < 0$ for non-trivial solutions to exist. For a general compact Hermitian manifold $(M, \omega)$, the existence condition is slightly more evolved \cite{Lupascu}. Equation \eqref{vortex_eq3} also has a very simple interpretation in terms of degenerate metrics. As usual, for given data $(L, M, \omega, A, \phi)$, define the deformed metric
\[
\omega' \ :=  \ \frac{- 1}{\bar{s}_\omega}\, \vert \phi \vert^2  \ \omega \ .
\]
\begin{proposition}
The pair $(A, \phi)$ satisfies the modified vortex equation \eqref{vortex_eq3} on $M$
 if and only if the degenerate metric $\omega'$ has constant scalar curvature $s_{\omega'} \: = \: \bar{s}_\omega $ over $M \setminus \cup_j Z_j$.
\end{proposition}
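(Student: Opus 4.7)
The strategy is to reuse the key curvature identity from the proof of Theorem \ref{vdm}, applied with $V = TM$, background Hermitian metric $f = \omega$ (viewing the K\"ahler form simultaneously as a Hermitian metric on $TM$), exponent $\alpha = 1$, and rescaling constant $\tau' := -\bar{s}_\omega$. With these choices the deformed Hermitian metric $f'$ coincides with $\omega'$, and the computation in that proof yields, on $M \setminus \cup_j Z_j$, the identity
\[
F_{\omega'} \ = \ F_\omega \: + \: F_A\, \text{I} \
\]
of $\text{End}(TM)$-valued $(1,1)$-forms, where $\text{I}$ is the identity endomorphism of $TM$. This step only requires that $f' f^{-1} = c\, \text{I}$ with $c := -\bar{s}_\omega^{-1} \vert\phi\vert^2$ is a scalar multiple of the identity, which is evident.

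The remainder is a conformal rescaling calculation. Since $\omega' = c\, \omega$ pointwise, we have $\Lambda_{\omega'} = c^{-1} \Lambda_\omega$ on $(1,1)$-forms. Taking the trace of the curvature identity, commuting $\text{Tr}$ past $\Lambda_{\omega'}$, and using $\text{Tr}(\text{I}) = m := \dim_\CC M$ gives
\[
s_{\omega'} \ = \ i\, \Lambda_{\omega'}\, \text{Tr}(F_{\omega'}) \ = \ c^{-1} \, \bigl(\, s_\omega \: + \: i\, m\, \Lambda_\omega F_A\, \bigr) \ .
\]
The condition $s_{\omega'} = \bar{s}_\omega$ over $M \setminus \cup_j Z_j$ is therefore equivalent to
\[
s_\omega \: + \: i\, m\, \Lambda_\omega F_A \ = \ c\, \bar{s}_\omega \ = \ -\vert \phi \vert^2 \ ,
\]
which, after dividing by $m$ and rearranging, is exactly equation \eqref{vortex_eq3}. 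The chain of implications is reversible, yielding both directions of the equivalence.

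The only subtlety I foresee is careful bookkeeping of the dual role of $\omega$ as a K\"ahler form on $M$ and as a Hermitian metric on $TM$, together with the positivity $c > 0$ needed for $\omega'$ to genuinely be a Hermitian metric. The latter is automatic because the existence of non-trivial solutions to \eqref{vortex_eq3} forces $\bar{s}_\omega < 0$, as noted just before the Proposition, and $\vert \phi \vert^2 > 0$ off the zero set.
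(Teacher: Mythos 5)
Your proof is correct and follows essentially the same route as the paper: the paper derives the Ricci-form identity $\rho_{\omega'} = \rho_\omega + i\, m\, F_A$ via the rescaling of $\det\omega'$ on the canonical bundle, which is precisely the trace of the endomorphism identity $F_{\omega'} = F_\omega + F_A\,\mathrm{I}$ that you import from the proof of Theorem \ref{vdm}, and both arguments then conclude with the conformal relation $\Lambda_{\omega'} = c^{-1}\Lambda_\omega$ and the definition $s_\omega = i\,\mathrm{Tr}(\Lambda_\omega F_\omega)$.
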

\begin{proof}
If $K \rightarrow M$ denotes the canonical line bundle, the induced hermitian metric on $K$ rescales as
\[
\det \omega' \ =\   \Big( \frac{-1}{\bar{s}_\omega} \, \vert \phi \vert_h^{2} \Big)^{\dim_\CC M} \: \det \omega \ . 
\]
Using that $F_A = \bar{\partial} \partial (\log \vert \phi \vert^2)$ away from the vanishing set of $\phi$, as before, it is clear that the real Ricci form transforms as
\[
\rho_{\omega'} \ =\  i  \bar{\partial} \partial \log (\det \omega')  \ = \  i\, (\dim_\CC M) \, F_A \: + \: \rho_\omega \ 
\]
over $M \setminus \cup_j Z_j$. Thus over this domain we get that
\[
\Big[ \,   i\Lambda_\omega F_A \, + \,  \frac{1}{\dim_\CC M} \, \big( \,\vert \phi \vert^2_h \, + \, s_\omega \, \big)  \, \Big] \, \omega \ = \ \frac{1}{\dim_\CC M}\, \Big[\, (\Lambda_{\omega'} \rho_{\omega'}) \, \omega' \, -\, (\Lambda_\omega \rho_\omega)\, \omega \, - \, \bar{s}_\omega\, \omega' \, +\, s_\omega\, \omega   \, \Big]
\]
and the result follows from the identity $\Lambda_\omega \rho_\omega = s_\omega$.
\end{proof}

\vspace{.2cm}

\medskip
\noindent
\textbf{Acknowledgements.} It is a pleasure to thank Dennis Eriksson and Nick Manton for helpful discussions about integration currents and vortex superposition, respectively. I also thank CAMGSD and Project PTDC/MAT/120411/2010 of FCT  - POPH/FSE for a generous fellowship.


\vspace{.8cm}

\noindent
{\small {\textsc{Centre for Mathematical Analysis, Geometry, and Dynamical Systems (CAMGSD), Instituto Superior T\'ecnico, Av. Rovisco Pais, 1049-001 Lisbon, Portugal} }}

\noindent
{\it Email address:}
{\tt joao.o.baptista@gmail.com}

\end{document}